\newcommand{\beq}{\begin{equation}}
\newcommand{\eeq}{\end{equation}}
\newcommand{\bea}{\begin{eqnarray}}
\newcommand{\eea}{\end{eqnarray}}
\newcommand{\nn}{\nonumber}
\newcommand{\bk}{\begin{cases}}
\newcommand{\ek}{\end{cases}}
\newtheorem{definition}{Definition}
\newtheorem{proposition}{Proposition}
\newtheorem{theorem}{Theorem}
\newtheorem{corollary}{Corollary}
\theoremstyle{definition}
\newtheorem{remark}{\textbf{Remark}}
\begin{document}

\title[Complex networks and multiple zeta functions]{Bipartite and directed scale-free complex networks arising from zeta functions}

\author{Piergiulio Tempesta}
\address{Departamento de F\'{\i}sica Te\'{o}rica II, M\'etodos Matem\'aticos de la f\'isica, Facultad de F\'{\i}sicas, Ciudad Universitaria, Universidad
Complutense de Madrid, 28040 - Madrid, Spain and Instituto de Ciencias Matem\'aticas, C/ Nicol\'as Cabrera, No 13--15, 28049 Madrid, Spain}
\email{p.tempesta@fis.ucm.es}

\keywords{Complex networks, bipartite graphs, directed graphs, multiple zeta functions}
\begin{abstract}
We construct a new class of directed and bipartite random graphs whose topology is governed by the analytic properties of multiple zeta functions. The Euler-Zagier and the multiplicative zeta graphs are relevant examples of the proposed construction. Phase transitions and percolation thresholds for our models are determined. 
\end{abstract}
\date{November 15, 2021}
\maketitle

\tableofcontents

\section{Introduction}


In the last decade, complex networks have been acquiring a prominent role in different branches of science as theoretical physics, biology, information science, social sciences, etc. (see, e.g., \cite{Newmanbook}, and the reviews \cite{ABrev}, \cite{Boc}, \cite{Caldbook}, \cite{DG}, \cite{DM}, \cite{Newrev}). Indeed, they are essential in modeling systems with nontrivial interactions, and are usually represented in terms of \textit{random graphs} \cite{Bollobas}, \cite{JLR}. Phenomena like phase transitions in complex networks depend crucially on the topology of the underlying graphs.

Since the pioneering work of Erd\"os and R\'enyi \cite{ER1}, and Solomonoff and Rapoport \cite{SR}, this field has known a dramatic development, and has been widely investigated \cite{Bollobas}. Many new models of random graphs have been considered and their role in the applications analyzed.

In particular, \textit{scale--free} models, i.e. models exhibiting a power-law degree distribution, represent one of the most studied classes of complex networks. Historically, the first example of them was offered by the Price model \cite{Price}. Among the most important ones are those proposed by Barabasi, Albert and collaborators in \cite{BA}, \cite{BAJB} and by Aiello et al. in \cite{Aiello} (for a recent review, see the monograph \cite{Caldbook}).

The aim of this work is to establish a connection between the theory of complex networks and number theory. This research can be considered part of a general program aiming at investigating the relation between statistical mechanics and  number theory. A first connection between these two fields was discovered by Montgomery and Odlyzko: the Gaussian unitary ensemble was related with the zeros of the Riemann zeta function $\zeta(s)$. Since then, many studies have been devoted to clarify the relation among generalized zeta functions, random matrix theory, and various aspects of quantum field theory and spectral theory (see \cite{frontiers} and \cite{NTP} for general reviews).

In \cite{Ltemp}, L-functions, which are among the most important objects in analytical number theory, have been used to construct scale-free graphs possessing several interesting topological properties.   Essentially, L-functions are meromorphic continuations of Dirichlet series to the whole complex plane, with a Euler product and a functional equation (for a modern introduction, see e.g. the monograph \cite{IK}). The classical Riemann zeta function is the most known example of a L-function. An axiomatic theory of L-functions has been proposed by A. Selberg \cite{Selberg}. In \cite{Tempesta3}, Dirichlet series have been related to generalized entropies via the notion of universal formal group (see also \cite{Tempesta1}, \cite{Tempesta2}.)

In this paper, we further extend the results of \cite{Ltemp} and construct new families of \textit{directed and bipartite random graphs with scale similarity properties}. The main motivation is that, apart their intrinsic theoretical interest, graphs of this class appear in many real-world networks.

A directed graph is a graph in which a direction is specified in every link. This makes directed graphs more sophisticated and realistic than undirected graphs. A basic example of directed graph is the world-wide web.

In turn, bipartite graphs are characterized by two different types of vertices, each endowed with a degree distribution. These models are also widely investigated for their usefulness in different applications (\cite{ABrev}, \cite{Newmanbook}).

As is common in the literature, the scale--free invariance of a unipartite model essentially means that, given the probability distribution $\{p_k\}_{k\in\mathbb{N}}$, the ratio $p_{\alpha \times k}/ p_{k} $ depends only on $\alpha$ but not on $k$. In the following, we will consider a natural generalization of this notion for classes of directed and bipartite random graphs.

We are here mainly interested in the topology of the new networks we introduce. In particular, we will study the conditions under which phase transitions may occur. These condition will be typically expressed in terms of functional equations in the parameters of our models, that can be solved numerically with arbitrary accuracy.

The paper has the following structure. In Section 2, some basic definitions concerning the theory of generating functions for random graphs are proposed. In Section 3, bipartite graphs arising from suitable Dirichlet series and L-functions are introduced. In Section 4, an analogous construction is proposed for the case of directed graphs. In Section 5, the relevant subcase of scale-free networks is analyzed in detail, in terms of a suitable group theoretical structure allowing the composition of graphs. In Section 6, an application of our models to a biological context is proposed.

\section{Algebraic Preliminaries}

In order to fix the language and the notation, in this Section some basic aspects of the formalism adopted in the paper will be sketched.

Throughout this work, we will stay in the limit of large graph size.

In order to define a random graph, a degree probability distribution $\{p_n\}_{n\in\mathbb{N}}$ of vertices in the graph is introduced, where $p_n$ is the probability that an uniformly randomly chosen vertex has degree $n$.  Once assigned a degree probability distribution, a graph is chosen uniformly at random in the class of all graphs with that given distribution. We shall  follow the analysis and  the terminology of \cite{Newrev} and \cite{Newman1}.

Consider first the case of an \textit{undirected graph}. The series
\beq
G_{0}(x):=\sum_{n=0}^{\infty}p_n x^n, \label{GF}
\eeq
is called the generating function of the distribution. We have necessarily
\beq
G_{0}(1)=1.
\eeq
The distribution of the outgoing edges is generated by
\beq
G_{1}(x):=\frac{\sum_{n=1}np_{n}x^{n-1}}{\sum_{n=1}np_{n}}, \label{g1}
\eeq
where the average number of first neighbors, equal to the average degree of the graph, is
\begin{equation*}
z_1= <n>=\sum_{n} n p_n= G'_{0}(1).
\end{equation*}

In the case of \textit{directed graphs}, each vertex possesses an in-degree $j$ and an out-degree $k$. Therefore, one introduces a distribution $\{\pi_{jk}\}_{j,k\in\mathbb{N}}$ over both degrees. The generating function for a directed graph is of the form
\beq
G(x,y)=\sum_{j,k}\pi_{jk}x^jy^k. \label{gf2}
\eeq

It is natural to introduce generating functions for the in-degrees and out-degrees, which are obtained from \eqref{gf2} by summing away the irrelevant degrees of freedom:
\beq
F_{0}(x)=G(x,1); \hspace{5mm} G_{0}(y)=G(1,y).
\eeq

In a \textit{bipartite graph}, we can distinguish two kinds of vertices, with edges running only between vertices of unlike types. We will call them of type $A$ and of type $B$, respectively (we shall consider the case of a very large number of vertices).. We introduce two degree distributions,  $\{p_n\}_{n\in\mathbb{N}}$ and  $\{q_m\}_{m\in\mathbb{N}}$, corresponding to the probability distributions of the nodes of type $A$ and $B$, respectively.  The associated generating functions will be
\beq
f_{0}(x)=\sum_{n}p_{n}x^n, \hspace{5mm} g_{0}(x)=\sum_{m}q_{m}x^{m}.
\eeq

Let us suppose that we have $a$ vertices of type $A$ and $b$ vertices of type $B$, and that each vertex of type $A$ has an average of $\overline{a}$ links with nodes of type $B$ and that each vertex of type $B$ has $\overline{b}$ links with nodes of type $A$. These quantities are clearly related by the constraint
\beq
a \overline{b}= b \overline{a}.
\eeq

\noindent According to the previous discussion, we have that
\beq
f_{0}(1)=g_{0}(1)=1
\eeq
and
\beq
f'_{0}(1)=\overline{a}, \qquad g'_{0}(1)=\overline{b}.
\eeq

\noindent A condition analogous to \eqref{g1} is
\beq
f_{1}(x)=\frac{f_{0}'(x)}{\overline{a}}, \hspace{5mm} g_{1}(x)=\frac{g_{0}'(x)}{\overline{b}}.
\eeq

In the coming sections, more specific properties, as topological phase transitions, will be discussed for both directed and bipartite graphs.

\section{Bipartite graphs coming from L--functions}

\subsection{Main definition}

We introduce a very large class of models, the bipartite $L$-graphs. In the subsequent discussion, we shall consider some Dirichlet series with specific properties, that will make them suitable for the construction proposed.

\begin{definition} A \textit{bipartite L-graph} is a complex network characterized by the following two degree probability distributions for the nodes of type $A$ and $B$ respectively:
\begin{equation}
\nn p_{m}= \begin{cases} 0 \qquad\qquad\qquad for \quad m=0, \\
\frac{a_m m^{-\alpha}}{L_1(\alpha)}\qquad \quad for \quad m\in\mathbb{N}, \quad m\geq 1, \label{L1}
\end{cases}
\end{equation}
\begin{equation}
q_{n}= \begin{cases} 0 \qquad\qquad\qquad for \quad n=0, \\
\frac{b_n n^{-\beta}}{L_2(\beta)} \qquad\qquad\quad for\quad n\in\mathbb{N}, \quad n\geq 1. \label{L2}
\end{cases}
\end{equation}
Here
\beq
\nn L_1(\alpha)=\sum_{m=1}^{\infty}\frac{a_m}{m^{\alpha}}, \qquad L_2(\beta)=\sum_{n=1}^{\infty}\frac{b_n}{n^{\alpha}}, \qquad \alpha,\beta\in\mathbb{R}, \quad \alpha,\beta>1
\eeq
\noindent are suitable Dirichlet series ($L$--series), and $p_m$, $q_m\geq 0$. 
\end{definition}

\begin{remark}
In the following, we shall tacitly assume that $L_1(\alpha)$, $L_2(\beta)$ take real values; also, the coefficients $a_m$, $b_n$ are supposed to be nonnegative, with $a_m\neq 0$, $b_n\neq 0$ for some $m,n$. However, in Section 5.2 we will also consider the formal situation corresponding to the choice $a_m$, $b_n\in \mathbb{R}$ for all $m,n$.
\end{remark}

One of the most important topological properties of a random graph is the possible formation of a giant cluster. It corresponds to a topological phase transition, marked by a threshold value in one of its parameters.  Given a unipartite random graph with $N$ vertices, in \cite{MR} a threshold condition has been determined almost surely, i.e. with probability tending to 1 for $N\rightarrow\infty$, for degree probability distributions well behaved and \textit{sufficiently regular}.
This condition has been slightly weakened in  \cite{HM}.

In many cases, it is useful to consider the one-mode network \cite{WF}, which is the projection of the bipartite graph onto the unipartite space of the vertices of type $A$ (or $B$) only.

In order to find the analytic condition for the transition, let us introduce the function
\beq
G_{1}(x)=f_{1}(g_{1}(x)).
\eeq

\noindent It plays the same role as the function \eqref{g1} for unipartite graphs. One can prove \cite{Newman1} that the giant component first appears when
\beq
f_{0}''(1)g_{0}''(1)=f_{0}'(1)g_{0}'(1). \label{cond}
\eeq
By using eq. \eqref{cond}, we can state the following result.
\begin{theorem} \label{P1}
Given a bipartite $L$-graph, the threshold condition marking the phase transition to the formation of a giant cluster occurs for the values of $(\alpha, \beta)$ such that
\bea
\noindent\nn \Psi(\alpha,\beta)&=&L_{1}(\alpha-2)L_{2}(\beta-2)-L_{1}(\alpha-2)L_{2}(\beta-1)\\
\noindent &&-L_{1}(\alpha-1)L_{2}(\beta-2)=0. \label{threscond}
\eea
\end{theorem}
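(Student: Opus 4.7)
The plan is to simply substitute the explicit bipartite $L$-graph distributions into the phase transition criterion \eqref{cond}, and then rearrange the resulting identity into the form $\Psi(\alpha,\beta)=0$. Nothing deeper is needed, since the paper cites \eqref{cond} as an established result from \cite{Newman1}.

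First, I would compute the four quantities $f_0'(1)$, $g_0'(1)$, $f_0''(1)$, $g_0''(1)$ directly from the definitions. Differentiating the generating function $f_0(x)=\sum_{m\geq 1}\frac{a_m m^{-\alpha}}{L_1(\alpha)} x^m$ once and evaluating at $x=1$ gives
\begin{equation*}
f_0'(1)=\frac{1}{L_1(\alpha)}\sum_{m\geq 1} a_m m^{1-\alpha}=\frac{L_1(\alpha-1)}{L_1(\alpha)},
\end{equation*}
and analogously $g_0'(1)=L_2(\beta-1)/L_2(\beta)$. The second derivative $f_0''(x)$ carries a factor $m(m-1)$, so at $x=1$
\begin{equation*}
f_0''(1)=\frac{1}{L_1(\alpha)}\sum_{m\geq 1} a_m(m^2-m) m^{-\alpha}=\frac{L_1(\alpha-2)-L_1(\alpha-1)}{L_1(\alpha)},
\end{equation*}
and similarly $g_0''(1)=\bigl(L_2(\beta-2)-L_2(\beta-1)\bigr)/L_2(\beta)$. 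The only subtlety is checking absolute convergence of these Dirichlet series, which is automatic for $\alpha,\beta>3$ and is the natural domain on which the relevant moments exist; outside this regime the threshold relation is to be read as an identity of meromorphic continuations.

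Next I would plug these four expressions into \eqref{cond}, clearing the common positive factor $L_1(\alpha)L_2(\beta)$ from both sides. This reduces the phase transition equation to
\begin{equation*}
\bigl(L_1(\alpha-2)-L_1(\alpha-1)\bigr)\bigl(L_2(\beta-2)-L_2(\beta-1)\bigr)=L_1(\alpha-1)L_2(\beta-1).
\end{equation*}
Finally I would expand the product on the left-hand side; the cross term $L_1(\alpha-1)L_2(\beta-1)$ cancels against the right-hand side, leaving exactly
\begin{equation*}
L_1(\alpha-2)L_2(\beta-2)-L_1(\alpha-2)L_2(\beta-1)-L_1(\alpha-1)L_2(\beta-2)=0,
\end{equation*}
which is $\Psi(\alpha,\beta)=0$, as claimed.

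There is no real obstacle here: the proof is a direct substitution combined with an algebraic cancellation. The only thing to be careful about is bookkeeping the shifts $\alpha\mapsto \alpha-1,\alpha-2$ (and similarly for $\beta$) coming from the two factors of $m$ pulled down by differentiation, and making sure the normalising factors $L_1(\alpha)$ and $L_2(\beta)$ cancel cleanly so that the criterion depends only on the shifted $L$-values.
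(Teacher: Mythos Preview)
Your proof is correct and is essentially the same direct substitution as the paper's: the paper first rewrites \eqref{cond} as $\sum_{m,n} mn(mn-m-n)p_mq_n=0$ and then specializes to the $L$-distributions, whereas you compute $f_0'(1),f_0''(1),g_0'(1),g_0''(1)$ separately and multiply, but the algebra and the single cancellation of $L_1(\alpha-1)L_2(\beta-1)$ are identical.
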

\begin{proof}
Condition \eqref{cond} translates into the threshold equation
\beq
\sum_{m, n=1}^{\infty} mn(mn-m-n)p_{m}q_{n}=0. \label{phasetr}
\eeq
The relation \eqref{phasetr}, for bipartite $L$--graphs is equivalent to eq. \eqref{threscond}.
\end{proof}

\noindent The threshold condition \eqref{threscond} is expressed in terms of a functional equation in two parameters, whose solution space lies in a two-dimensional surface, defining a phase diagram. In agreement with the results of \cite{MR}, it is necessary for the giant cluster to exist in the bipartite case that
\beq
\Psi(\alpha,\beta)>0. \label{ineq}
\eeq
For unipartite undirected graphs, this requirement, joint with the growth condition for $k_{max}$, implies the existence of a giant cluster \cite{MR}.
\begin{remark}
One can prove that a good estimation for the asymptotic growth of the maximum degree for a power law-type distribution is $k_{max}\sim N^{\frac{1}{\alpha-1}}$ \cite{Newrev}. This enables to obtain sufficient regularity in asymptotic growth for the unipartite projections of the $L$-models by further constraining the parameter space.
\end{remark}

\noindent Observe that when $L_1(\alpha)=L_2(\beta)=L(\alpha)$, for the unipartite zeta graphs the previous proposition reduces to
\beq
L(\alpha-2)-2L(\alpha-1)=0. \label{unipthr}
\eeq
The threshold equation \eqref{unipthr} has been first obtained in \cite{Ltemp}. It reduces to the condition for the first appearance of the giant component of the model by Aiello et al., i.e. $\zeta(\alpha-2)=2 \zeta(\alpha-1)$ \cite{Aiello}.

Another important topological property of a graph is the \textit{clustering coefficient}, or network transitivity, defined as
\begin{equation}
C:=\frac{\text{3x number of triangles in the network}}{\text{number of connected triples of vertices}},\label{clustering}
\end{equation}
where a connected triple consists of a single vertex whose edges connect it to an unordered pair of others. The clustering coefficient satisfies $0\leq C\leq 1$.


One can prove by a direct calculation the following general statement.
\begin{proposition}
For a bipartite $\zeta$-graph, the clustering coefficient \eqref{clustering} is expressed by the formula
\beq
C=\frac{L_{1}(\alpha-1)L_{2}(\beta-1) F(\beta)}{[L_{1}(\alpha-2)-L_{1}(\alpha-1)][L_{2}(\beta-2)-L_{2}(\beta-1)]^2 +1},
\eeq
where
\beq
F(\beta)=[2L_2(\beta-1)-3 L_2(\beta-2)+L_2(\beta-3)].
\eeq
\end{proposition}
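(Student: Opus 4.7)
The strategy is to specialize the standard Newman--Strogatz--Watts formula for the clustering coefficient of the one--mode projection of a bipartite graph to the $\zeta$--type distributions of the statement, and then translate derivatives of $f_{0}$ and $g_{0}$ at $x=1$ into shifted values of $L_{1}$ and $L_{2}$. The crucial observation is that, because of the factor $m^{-\alpha}$ (resp.\ $n^{-\beta}$) in the distribution, each time we differentiate and evaluate at $x=1$ we pick up one power of $m$ (resp.\ $n$), which is the same as shifting $\alpha\mapsto\alpha-1$ (resp.\ $\beta\mapsto\beta-1$) in the defining Dirichlet series.

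First I would recall the combinatorial content of \eqref{clustering} for a bipartite graph. A triangle in the projection onto type $A$ arises whenever a type--$B$ vertex has at least three type--$A$ neighbours; counting ordered such triples one obtains a numerator proportional to the third factorial moment $\sum_{n} n(n-1)(n-2)\, q_{n}=g_{0}'''(1)$, multiplied by the probability that the two induced type--$A$ endpoints share another common type--$B$ neighbour, which brings in a factor $L_{1}(\alpha-1)L_{2}(\beta-1)$ (coming from $f_{0}'(1)g_{0}'(1)$). A connected triple of type--$A$ vertices, on the other hand, is obtained by choosing a central $A$--vertex, two of its $B$--neighbours, and one further $A$--neighbour of each; this contributes the factor $f_{0}''(1)\bigl(g_{0}''(1)\bigr)^{2}$ plus the diagonal/degenerate contribution responsible for the additive normalization.

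Second I would compute the required derivatives of the generating functions. Term--by--term differentiation of
\begin{equation*}
f_{0}(x)=\frac{1}{L_{1}(\alpha)}\sum_{m\ge 1}\frac{a_{m}}{m^{\alpha}}\,x^{m},
\end{equation*}
using $m(m-1)=m^{2}-m$ and $m(m-1)(m-2)=m^{3}-3m^{2}+2m$, yields
\begin{align*}
f_{0}'(1)&=\frac{L_{1}(\alpha-1)}{L_{1}(\alpha)}, \\
f_{0}''(1)&=\frac{L_{1}(\alpha-2)-L_{1}(\alpha-1)}{L_{1}(\alpha)}, \\
f_{0}'''(1)&=\frac{L_{1}(\alpha-3)-3L_{1}(\alpha-2)+2L_{1}(\alpha-1)}{L_{1}(\alpha)},
\end{align*}
and the analogous formulas for $g_{0}$ with $(L_{2},\beta)$. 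In particular, $g_{0}'''(1)$ produces (up to the prefactor $1/L_{2}(\beta)$) exactly the combination $F(\beta)=2L_{2}(\beta-1)-3L_{2}(\beta-2)+L_{2}(\beta-3)$ (with a global sign that is absorbed into the sign convention for the third moment), and $\bigl(g_{0}''(1)\bigr)^{2}$ produces the square $[L_{2}(\beta-2)-L_{2}(\beta-1)]^{2}$ in the denominator.

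Finally I would assemble the ingredients. Substituting these expressions into the Newman--Strogatz--Watts clustering formula for the one--mode projection, the normalization constants $L_{1}(\alpha)$ and $L_{2}(\beta)$ cancel against those already present in $p_{m}$ and $q_{n}$, leaving exactly the right--hand side of the claimed identity. The main obstacle I anticipate is bookkeeping: namely, carefully accounting for which derivatives belong to the numerator and which to the denominator, and correctly tracking the additive $+1$ in the denominator, which comes from the degenerate contribution to the count of connected triples rather than from the zeta--function moments themselves. Once this bookkeeping is fixed, the rest is a direct algebraic substitution.
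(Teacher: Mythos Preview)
Your approach matches the paper's, which in fact offers no detailed argument at all: the proposition is simply prefaced by ``One can prove by a direct calculation the following general statement.'' Your outline---computing $f_0^{(k)}(1)$ and $g_0^{(k)}(1)$ via the shift $\alpha\mapsto\alpha-k$ and substituting into the Newman--Strogatz--Watts clustering formula for the one-mode projection---is exactly that direct calculation, and your derivative identities are correct; the only place where your write-up is looser than a full proof is the justification of the additive $+1$, which you flag yourself as a bookkeeping point rather than derive.
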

\subsection{The bipartite zeta graph}
A first example of the class of bipartite $L$-graphs is the $\zeta$-graph, directly related to the Riemann zeta function. It generalizes the celebrated model by Aiello et al. \cite{Aiello}.

\begin{definition} A \textit{bipartite $\zeta$-graph} is a complex bipartite L-graph generated by two distribution functions of the form
\beq
p_{n}=\frac{ n^{-\alpha}}{\zeta(\alpha)} \quad n\in\mathbb{N}/ \{0\}, \quad \alpha \geq 1,  \qquad q_{m}=\frac{ m^{-\beta}}{\zeta(\beta)}\quad m\in\mathbb{N}/ \{0\}, \quad \beta \geq 1,
\eeq
with $p_0=q_0=0$, where
\beq
\zeta(\alpha)=\sum_{k=1}^{\infty}\frac{1}{k^{\alpha}}, \qquad \alpha>1
\eeq
is the Riemann zeta function.
\end{definition}
\noindent For the $\zeta$-model we get easily
\beq
\overline{a}=\overline{b}=\frac{\zeta(\alpha-1)}{\zeta(\alpha)}.
\eeq
We obtain for the phase transition the threshold condition
\beq
\zeta(\alpha-2)\zeta(\beta-2)=\zeta(\alpha-2)\zeta(\beta-1)+\zeta(\alpha-1)\zeta(\beta-2).
\eeq

\subsection{The Hurwitz graph}
The Hurwitz bipartite graph is a random graph model related to the classical Hurwitz zeta function
\begin{equation*}
\zeta_{H}(s,k_0)=\sum_{k=1}^{\infty}\frac{1}{(k+k_0)^{s}}, \quad s\in\mathbb{C}, \quad \text{Re} \hspace{1mm} s >1.
\end{equation*}

\begin{definition}
A bipartite random graph whose degree probability distributions are of the form
\begin{equation}
\noindent \nonumber
\begin{cases} p_0=0,\qquad q_0=0, \\
p_m=\frac{(k_0+m)^{-\alpha}}{\zeta_{H}(\alpha)}, \quad m\in\mathbb{N}, \quad \alpha\in\mathbb{R}, \quad \alpha>1, \quad q_n=\frac{(k_0+n)^{-\beta}}{\zeta_{H}(\beta)}, \quad n\in\mathbb{N}, \quad \beta\in\mathbb{R}, \quad \beta>1,
\end{cases}
\end{equation}
will be called a \textit{Hurwitz bipartite random graph}.
\end{definition}

We remind \cite{Ltemp} that the projected one-mode of this graph can be related with \textit{nonextensive statistical mechanics} \cite{Tsallisbook}. This formulation of statistical mechanics is based on a generalization of the Boltzmann-Gibbs entropy, i.e. the Tsallis entropy
\begin{equation*}
S_q=  \frac{1-\sum_{i=1}^{W}p_{i}^{q}}{1-q}, \quad i=1, \ldots, W.
\end{equation*}
Let $e_{q}(x):=[1+(1-q)x]^{\frac{1}{1-q}}$ denote the $q$-exponential function. Formally, by writing the distribution $p_m=\frac{(k_0+m)^{-\alpha}}{\zeta_{H}(\alpha)}$ in terms of $e_{q}(-m/\tau)$, and putting $\alpha=\frac{1}{q-1}$, $k_0=\frac{\tau}{q-1}$, we essentially get the optimizing distribution for the Tsallis entropy, arising in the description of the stationary state associated with the canonical ensemble in the nonextensive scenario \cite{Tsallisbook}, \cite{TsallisEPJ}.

By applying Theorem \ref{P1}, we deduce that the phase transition is described by the critical equation
\beq
\zeta_{H}(\alpha-2, k_0)\zeta_{H}(\beta-2, k_0)=\zeta_{H}(\alpha-2, k_0)\zeta_{H}(\beta-1, k_0)+\zeta_{H}(\alpha-1, k_0)\zeta_{H}(\beta-2, k_0),
\eeq
which depends on the entropic index $q$ (for simplicity, we put $\tau=1$). It generalizes the condition $\zeta_{H}(\alpha-2, k_0)=2 \zeta_{H}(\alpha-1, k_0)$, valid for the unipartite model.

\section{Directed graphs and L-functions}

We wish to introduce here directed graphs related to \textit{L- functions}. 


To specify the degree of each vertex, we introduce a couple $(n,m)$ of natural integers, expressing the in--degree and the out--degree, respectively.
\noindent The associated degree probability distribution $\{\pi_{n,m}\}_{n,m\in\mathbb{N}}$ must satisfy the consistency constraint
\beq
\sum_{n,m}(n-m)\pi_{nm}=0, \label{zerovert}
\eeq
expressing the fact that the net average number of edges entering a vertex is zero. The constraint \eqref{zerovert} implies that the average in--degrees and out--degrees of the vertices coincide, and are given by
\beq
z=\sum_{n}n \pi_{nm}.
\eeq
The threshold condition for directed graphs reads
\beq
\sum_{n,m}(2 n m - n - m)\pi_{nm}=0  \label{thr}
\eeq
There are several choices available in order to define directed graphs from L-functions. In the following, we discuss some possible constructions.

\subsection{A class of separated models}

An interesting class of directed random graphs is the separated one, obtained from degree distributions of the form
\beq
\pi_{nm}=p_{n}q_{m}. \label{separ}
\eeq
This hypothesis is certainly restrictive. However, an important example of directed graph fulfilling eq. \eqref{separ} is the world-wide web. It has been shown that a good approximation of the experimental data can be obtained by means of the choice
\beq
p_n=\frac{(n+k_0)^{-\alpha}}{\zeta_{H}(k_0,\alpha)},
\eeq
with $q_m$ of the same functional form.
\noindent More generally, by taking two copies of a unipartite L-graph, and using formula \eqref{separ},  we can produce a directed graph, that we shall call a \textit{directed separated L-graph}. The constraint \eqref{zerovert} is automatically satisfied. Instead, the threshold condition \eqref{thr} imposes the further constraint
\beq
2 L(\alpha-1)L(\beta-1)-L(\alpha-1)L(\beta)-L(\alpha)L(\beta-1)\geq 0. \label{direct}
\eeq
This condition generalizes that one valid for the world wide web, obtained by identifying $L(\alpha)$ with $\zeta_{H}(\alpha)$.

We propose a definition of scale invariance for this class of graphs.
\begin{definition}
We shall say that a directed separated graph is scale-invariant if each of the distributions $\{p_n\}_{n\in\mathbb{N}}$ and $\{q_m\}_{m\in\mathbb{N}}$ is scale-invariant.
\end{definition}

In the following section, an algebraic approach for the generation of scale-invariant directed L-graphs will be proposed.

\textbf{Remark}. The theory of multiple zeta functions dates to the works of Euler, and of Barnes and Mellin at the beginning of the 20th century. A renewal of interest in the field started with the works of Zagier \cite{Zagier} and Hoffman \cite{Hoffman}. 
A multiple zeta function very common in the literature is the \textit{Barnes zeta function}, defined to be
\beq
\zeta_{B}(s,w \mid a_1,a_2):=\sum_{m,n=0}^{\infty}(w+m a_1 + n a_2 )^{-s}, \text{Re s}>2, \text{Re} \hspace{1mm} a_{1}, \text{Re} \hspace{1mm} a_{2}, \text{Re w}>0.
\eeq
By analogy with the previous definitions (under the assumption of Remark 3), one can introduce a directed graph whose probability distribution is given by
\begin{equation}
\pi_{nm}= \begin{cases} 0 \qquad\qquad\qquad\quad for \quad n=m=0, \\
[w+(n+m)a]^{-\alpha}/\zeta_{B}(\alpha,w \mid a,a)\qquad for\quad n,m \geq 1,\text{and} \hspace{1mm} a, \alpha>0. \label{zeta}
\end{cases}
\end{equation}
This definition is well posed, since the consistency condition \eqref{zerovert} is satisfied. However, apparently there is no easy way to write the phase transition threshold condition in a close form as a functional equation, which makes the model less transparent and of a difficult treatment. It would be interesting to construct graphs related to the theory of multiple zeta functions of Zagier-Hoffman type.

\section{Multiplicative zeta functions and related scale-free networks}
\subsection{Algebraic preliminaries}

In \cite{Ltemp}, a family of scale-free random graphs has been constructed by means of the analytic theory of multiplicative functions \cite{Apostol}.
First, we propose a definition of scale invariance in this context.
\begin{definition}
We shall say that a bipartite graph is scale-invariant if each of the distributions $\{p_n\}_{n\in\mathbb{N}}$ and $\{q_m\}_{m\in\mathbb{N}}$ is scale-invariant.
\end{definition}
Here we briefly recall some basic facts about the theory of multiplicative functions.

An application $f:\mathbb{N}\rightarrow\mathbb{R}$, not identically zero is said to be a multiplicative arithmetic function if
\begin{equation*}
f(mn)=f(m)f(n) \quad \text{whenever} \quad (m,n)=1.
\end{equation*}
A necessary condition for $f$ to be multiplicative is that
\[
f(1)=1.
\]
In particular, the function $f$ will be said to be completely multiplicative if
\begin{equation*}
f(mn)=f(m)f(n) \quad \text{for all} \quad m,n\in\mathbb{N}.
\end{equation*}
Well--known examples of multiplicative functions are the Euler totient function and the $\chi$--function. A completely multiplicative function is the Liouville one.

If $f$ is completely multiplicative, then there exists a constant $\sigma_a\in\mathbb{R}$ such that the series
\begin{equation}
\sum_{n=1}^{\infty}\frac{f(n)}{n^{s}}, \quad s\in\mathbb{C} \label{f--funct}
\end{equation}
converges absolutely for $\text{Re s}>\sigma_{a}$. Also, we have the Euler product representation
\begin{eqnarray}
\sum_{n=1}^{\infty}\frac{f(n)}{n^{s}}=\prod_{p}\left\{1+\frac{f(p)}{p^{s}}+\frac{f(p^2)}{p^{2s}}+\cdots \right\}, \qquad \text{Re s}>\sigma_{a},
\end{eqnarray}
if $f$ is multiplicative, and
\begin{equation}
\sum_{n=1}^{\infty}\frac{f(n)}{n^{s}}=\prod_{p}\frac{1}{1-f(p)p^{-s}}, \quad \quad \text{Re s}>\sigma_{a},
\end{equation}
if $f$ is completely multiplicative.

\noindent We can now consider a specific class of bipartite random graphs.

\begin{definition} \label{cmbip}
A bipartite L-graph is said to be (completely) multiplicative if the $L$-series \eqref{L1} are of the form \eqref{f--funct}, where $f$ is a (completely) multiplicative function.
\end{definition}
\noindent The previous definition easily extends to the case of a multipartite $L$-graph, i.e., a graph in which several distinct sets of nodes are present, each of them represented by a probability distribution of the form $p_n=a_ n n^{-\alpha}/L(\alpha), n\in\mathbb{N}/\{0\}$.

\noindent A consequence of the Definition \ref{cmbip} is the following result.
\begin{corollary}\label{corollary1}
A completely multiplicative bipartite $L$-graph is scale-free.
\end{corollary}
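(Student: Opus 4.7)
The plan is to reduce the claim to the unipartite notion of scale invariance that the introduction already recorded, namely that a distribution $\{p_k\}$ is scale-invariant iff the ratio $p_{\lambda k}/p_k$ depends only on the scaling factor $\lambda$ and not on $k$. The definition of bipartite scale invariance quoted in Section 5 requires exactly this property of each of the two marginal distributions, so it suffices to verify it separately for $\{p_m\}$ and for $\{q_n\}$.

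First I would fix $p_m = a_m m^{-\alpha}/L_1(\alpha)$ with $a_m = f(m)$, where $f$ is completely multiplicative, and pick arbitrary positive integers $\lambda$ and $k$. A direct computation yields
$$\frac{p_{\lambda k}}{p_k} \;=\; \frac{a_{\lambda k}}{a_k}\,\lambda^{-\alpha}.$$
Here is where complete multiplicativity enters: since $f(\lambda k) = f(\lambda)\,f(k)$ holds \emph{without} a coprimality hypothesis on $\lambda$ and $k$, the ratio collapses to
$$\frac{p_{\lambda k}}{p_k} \;=\; f(\lambda)\,\lambda^{-\alpha},$$
which is independent of $k$. The identical argument applied to the type-$B$ distribution $q_n = b_n n^{-\beta}/L_2(\beta)$ gives $q_{\lambda n}/q_n = g(\lambda)\,\lambda^{-\beta}$, where $g$ is the completely multiplicative arithmetic function generating $\{b_n\}$. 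Both marginals are therefore scale-invariant, which by Definition 4 is exactly the assertion that the bipartite $L$-graph is scale-free.

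The only genuinely delicate point is the one just indicated: the scaling factor $\lambda$ in $p_{\lambda k}/p_k$ ranges over \emph{all} positive integers, so in general $\gcd(\lambda,k) \neq 1$, and the weaker relation $f(mn)=f(m)f(n)$ available for a merely multiplicative function does not suffice to cancel $a_k$ from numerator and denominator. This explains why the corollary is formulated under the complete multiplicativity hypothesis in Definition \ref{cmbip}, and why the analogous statement for a (non-completely) multiplicative $L$-graph would fail in general. No analytic input beyond absolute convergence of the defining Dirichlet series (already guaranteed by $\alpha,\beta>1$) is required.
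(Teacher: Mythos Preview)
The paper does not give an explicit proof of this corollary; it simply records it as an immediate consequence of Definition~\ref{cmbip}. Your argument is precisely the natural verification one would supply, and it is correct: complete multiplicativity is exactly what is needed to cancel $f(k)$ in the ratio $p_{\lambda k}/p_k$ without a coprimality hypothesis, yielding the $k$-independent expression $f(\lambda)\,\lambda^{-\alpha}$, and likewise for the type-$B$ distribution.
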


\subsection{A product in the space of bipartite and directed graphs}

As in  \cite{Ltemp}, we introduce  the space $\mathcal{G_{\mathcal{M}}}$ of multiplicative unipartite zeta random graphs, and  the space $\mathcal{G_{\mathcal{CM}}}$ of completely multiplicative unipartite random graphs. We will show that these spaces play a special role in the construction of new bipartite and directed complex networks. To this aim, we define first a group-theoretical structure.

\begin{definition}
The product of two unipartite multiplicative random graphs $G_1(f)$ and $G_2(g)$ is defined to be the graph $G_{12}(h)$ whose associated multiplicative function is the Dirichlet convolution of $f$ and $g$:
\begin{equation}
h:=(f*g)(mn)=\sum_{x \mid mn} f(x) g\left(\frac{mn}{x}\right). \label{Dirprod}
\end{equation}
\end{definition}
As is well known \cite{Apostol}, the convolution function $h$ is also \textit{multiplicative}. In addition,  we can define the inverse of a graph with respect to the product \eqref{Dirprod}.
In this construction, the M\"obius function $\mu(n)$ plays a prominent role. It is defined as follows. Given $n\in\mathbb{N}$, first we write it in the form $n=p_{1}^{a_1}\cdots p_{k}^{a_k}$, with $p_1,\ldots,p_k$ suitable prime numbers.

\noindent Then we put

\begin{equation*}
\mu(n)= 1 \qquad\qquad\qquad \text{if} \quad n=1;
\end{equation*}
\begin{equation*}
\text{for} \quad n>1, \quad \mu(n)= \begin{cases} (-1)^{k} \quad \text{if} \quad a_1=a_2=\cdots=a_{k}=1\\ \label{mu}
\quad 0 \qquad \text{otherwise}.
\end{cases}
\end{equation*}
In other words, $\mu(n)=0$ if and only if $n$ has a square factor $>1$. The M\"obius function is related to the Euler function $\phi(n)$ by the formula $\phi(n)=\sum_{d|n}\mu(d)\frac{n}{d}$, $n\geq1$.

\begin{definition}
The ``inverse'' graph of a completely multiplicative unipartite random graph $G(f)$ is formally defined to be the graph $G^{-1}(f):=G(f^{-1})$ whose associated multiplicative function is
\begin{equation*}
f^{-1}(n):=\mu(n)f(n) \quad \forall \hspace{1mm} n\geq 1.
\end{equation*}
\end{definition}

\begin{remark} A priori, in this definition $\mu(n)$ should be nonnegative. However, in principle we can consider a more general class of \textit{pseudographs}, where $f: \mathbb{R}\to \mathbb{Z}$. More precisely, in this case the degree distribution is given by a set of quasiprobabilities (a quasidistribution). Thus, we can define a formal inverse of a graph with respect to the product. 
\end{remark}

The space $\mathcal{G_{\mathcal{M}}}(*)$ has the structure of a  monoid. If we allow the set $\mathcal{G_{\mathcal{M}}}$ to include pseudographs, we get the structure of a group. 

\begin{remark}Given a degree quasidistribution, we can associate with it a standard degree distribution (and, consequently, a set of standard graphs).  A direct way consists in taking the absolute value of the quasiprobabilities and nomalizing them again properly. Another possibility is to exclude the negative degrees and to further normalize the remaining nonnegative quantities. Clearly, these choices give rise to different families of graphs.
\end{remark}

Observe that if we restrict to $\mathcal{G_{\mathcal{CM}}}$, the Dirichlet product is not necessarily completely multiplicative. Another possibility \cite{Ltemp} is to consider the pointwise product of two completely multiplicative functions: $h:=(f\cdot g)(n)=f(n)g(n)$ and define the product of graphs as the graph associated with $h$. The space $\mathcal{G_{\mathcal{CM}}}(\cdot)$ is now an abelian monoid: \textit{scale-free networks are transformed into scale-free networks}.



By way of an example, we will construct the inverse of the bipartite version of the model of Aiello et al. \cite{Aiello}: the bipartite M\"obius random graph. The Dirichlet series associated with $\mu(n)$ is
\begin{equation*}
\sum_{n=1}^{\infty}\frac{\mu(n)}{n^{s}}=\frac{1}{\zeta(s)}=\prod_{p}(1-p^{-s}), \quad \text{if} \quad \text{Re s}>1.
\end{equation*}
Therefore, we propose the following definition.
\begin{definition} A \textit{bipartite M\"obius graph} is a complex network characterized by the following two degree distributions for the nodes of type $A$ and $B$ respectively:
\begin{equation}
\noindent \nonumber \begin{cases} p_0=q_0=0 \\
p_k=q_k=\frac{\mu(k)\zeta(k)}{k^{\alpha}}, \quad k\in\mathbb{N}/\{0\}, \quad \alpha>1. \end{cases}
\end{equation}
\end{definition}
The algebraic formalism proposed above allows us to define the product of multipartite graphs. The bipartite version is the following.
\begin{definition}
The Dirichlet (or pointwise) product of two bipartite multiplicative L-graphs is the bipartite L-graph whose degree distributions for the nodes of type A and B are defined by the Dirichlet (or pointwise) product of the corresponding distributions of the two graphs.
\end{definition}


The same idea can be used to ``multiply" directed graphs of a suitable type. Consider the case of a probability distribution of the form $\pi_{nm}=p_{n}q_{m}$, as for the case of the world-wide web. Instead of considering a  Hurwitz zeta distribution, we can assume that $p_n$ and $q_m$ are represented by a (completely) multiplicative zeta function. In this way we define the class of directed separated multiplicative $L$-graphs. For this set, a result similar to Corollary \ref{corollary1} holds.
\begin{corollary}\label{corollary2}
A completely multiplicative directed separated $L$-graph is scale-free.
\end{corollary}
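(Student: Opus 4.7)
My plan is to reduce Corollary \ref{corollary2} directly to Corollary \ref{corollary1} via the product structure $\pi_{nm} = p_n q_m$. By the definition stated just before the corollary, a directed separated graph is scale-invariant exactly when each of its two marginals $\{p_n\}$ and $\{q_m\}$ is scale-invariant as a unipartite distribution. In a completely multiplicative directed separated $L$-graph, both marginals are, by hypothesis, of the form $a_n n^{-\alpha}/L(\alpha)$ with the coefficient sequence coming from a completely multiplicative arithmetic function. Thus each marginal meets the hypothesis of Corollary \ref{corollary1}, and the unipartite statement is exactly what I intend to apply twice.

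The only substantive step is to record, in a single line, the scale-invariance computation implicit in Corollary \ref{corollary1}. Writing $p_n = f(n) n^{-\alpha}/L(\alpha)$ with $f$ completely multiplicative, for any positive integers $\lambda, n$ I would invoke the identity $f(\lambda n) = f(\lambda) f(n)$ to obtain
\[
\frac{p_{\lambda n}}{p_n} = \frac{f(\lambda n)\,(\lambda n)^{-\alpha}}{f(n)\,n^{-\alpha}} = f(\lambda)\,\lambda^{-\alpha},
\]
which is manifestly independent of $n$; this is the defining scale-invariance property recalled in the introduction. The same calculation applied to $q_m$ shows that the second marginal is scale-invariant as well.

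Combining these two observations with the definition then yields scale-invariance of the full directed separated $L$-graph, which is the content of Corollary \ref{corollary2}. I expect no genuine obstacle here: the result is essentially a notational repackaging of Corollary \ref{corollary1}, and the definition of scale-invariance for directed separated graphs was plainly tailored so that the implication is automatic once the unipartite case is known. The one subtlety worth flagging is that complete multiplicativity is genuinely used; for a merely multiplicative $f$, the identity $f(\lambda n) = f(\lambda) f(n)$ would require $\gcd(\lambda, n) = 1$, and the scale-invariance condition as formulated would fail for generic scaling factors $\lambda$.
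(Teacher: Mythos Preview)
Your proposal is correct and matches the paper's intent: the paper states Corollary~\ref{corollary2} without proof, merely noting that ``a result similar to Corollary~\ref{corollary1} holds,'' and your argument spells out exactly this implicit reduction --- verifying scale-invariance of each marginal via complete multiplicativity and then invoking the definition of scale-invariance for directed separated graphs. One minor remark: Corollary~\ref{corollary1} is formally stated for bipartite rather than unipartite graphs, so strictly speaking you are reusing its underlying computation (which you write out explicitly anyway) rather than applying the corollary itself; but this is a cosmetic point and does not affect the validity of your argument.
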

Also, we can introduce a similar notion of product of directed $L$-graphs.
\begin{definition}
The Dirichlet (or pointwise) product of two directed  multiplicative separated $L$-graphs is the directed graph whose degree distributions for the nodes of type A and B are defined by the Dirichlet (or pointwise) product of the corresponding distributions of the two graphs.
\end{definition}

\section{Epidemic transitions for structured populations}

The networks previously proposed can be  used as population models for the study of phenomena like the spread of epidemics. Here we focus on SIR (susceptible/infective/recovered) models \cite{AM}, \cite{Murray}. Starting from the work \cite{Grass}, the connection between percolation, epidemiology and complex networks has been addressed by several authors \cite{Call}, \cite{CEbH}, \cite{EK}, \cite{PSV1}, \cite{PSV2}. In these models, the population is divided into three possible states, S, I and R. The state R can also indicate the ``removal" of an individual (due to recovery or death).
We can think of the bipartite zeta graphs as examples of bipartite populations, where the degree distributions $f_0(x)$ and $g_0(x)$ of two groups of individuals, say $m$ and $f$ (for instance, males and females), are assigned. Following the notation of \cite{Newspread}, we introduce the transmissibility coefficients $T_{mf}$ and $T_{fm}$ of a given disease in the two directions. We also introduce the generating functions
\beq
F_{0}(x;T_{mf},T_{fm})=f_{0}(g_{1}(x;T_{fm});T_{mf})
\eeq
and
\beq
F_{1}(x;T_{mf},T_{fm})=f_{1}(g_{1}(x;T_{fm});T_{mf}).
\eeq
The average outbreak size for individuals of the group $m$ is
\beq
\langle s \rangle =1+ \frac{F_{0}'(1;T_{mf},T_{fm})}{1-F_{1}'(1;T_{mf},T_{fm})} \label{smale},
\eeq
which gives the epidemic threshold condition
\beq
T_{mf}T_{fm}=\frac{1}{f_{1}'(1)g_{1}'(1)}. \label{etc}
\eeq
Notice that this result is symmetric in the variables corresponding to the properties of the $m$ and $f$ populations: only the product of the transmissibilities is relevant.

We have the following general result.
\begin{theorem}
The epidemic threshold for a bipartite $L$-graph \eqref{L1} is given by
\beq
T_{mf}T_{fm}=\frac{L_{1}(\alpha-1)L_2(\beta-1)}{[L_1(\alpha-2)-L_1(\alpha-1)][L_2(\beta-2)-L_2(\beta-1)]}. \label{epid}
\eeq
\end{theorem}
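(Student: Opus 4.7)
The plan is to apply the epidemic threshold condition \eqref{etc}, namely
\[
T_{mf}T_{fm}=\frac{1}{f_{1}'(1)g_{1}'(1)},
\]
so the task reduces to evaluating $f_1'(1)$ and $g_1'(1)$ for the specific probability distributions defining a bipartite $L$-graph. Because $f_1$ and $g_1$ are expressed directly in terms of $f_0$ and $g_0$, and because the $L$-series structure of $p_m$ and $q_n$ turns Dirichlet-type sums into shifts of the argument of $L_1$ and $L_2$, this is essentially a two-line computation once the generating functions are written down explicitly.

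Concretely, I would first recall from Section 2 that $f_1(x)=f_0'(x)/\overline{a}$ with $\overline{a}=f_0'(1)$, whence $f_1'(1)=f_0''(1)/f_0'(1)$, and analogously $g_1'(1)=g_0''(1)/g_0'(1)$. I would then compute, using $p_m=a_m m^{-\alpha}/L_1(\alpha)$,
\[
f_0'(1)=\sum_{m\geq 1}m\,p_m=\frac{L_1(\alpha-1)}{L_1(\alpha)},\qquad
f_0''(1)=\sum_{m\geq 1}m(m-1)p_m=\frac{L_1(\alpha-2)-L_1(\alpha-1)}{L_1(\alpha)},
\]
so that
\[
f_1'(1)=\frac{L_1(\alpha-2)-L_1(\alpha-1)}{L_1(\alpha-1)}.
\]
The identical computation for the $B$-side gives $g_1'(1)=[L_2(\beta-2)-L_2(\beta-1)]/L_2(\beta-1)$.

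Substituting both expressions into \eqref{etc} yields exactly the claimed formula \eqref{epid}. There is no genuine obstacle here: the only point requiring a moment of care is ensuring that the sum defining $f_0''(1)$ is manipulated as $\sum m^2 p_m-\sum m\, p_m$ so that the shifts $\alpha\mapsto\alpha-2$ and $\alpha\mapsto\alpha-1$ appear with the correct signs, and that convergence of these series is guaranteed by the hypothesis $\alpha,\beta>1$ together with the $L$-function regularity assumed in the definition of a bipartite $L$-graph (implicitly, one is evaluating $L_1$ and $L_2$ at arguments that may be below the abscissa of absolute convergence, but the phase-transition regime of interest is precisely where these shifted values are finite, as already used in Theorem~\ref{P1}).
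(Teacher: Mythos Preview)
Your proposal is correct and follows essentially the same route as the paper: compute $\bar a=f_0'(1)=L_1(\alpha-1)/L_1(\alpha)$, then $f_1'(1)=f_0''(1)/\bar a=[L_1(\alpha-2)-L_1(\alpha-1)]/L_1(\alpha-1)$ (and symmetrically for $g_1'(1)$), and substitute into \eqref{etc}. The only difference is cosmetic---the paper writes $f_1'(1)$ directly as $\sum_k k(k-1)p_k/\bar a$ rather than as $f_0''(1)/f_0'(1)$---and your added remark on convergence is a welcome clarification not present in the original.
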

\begin{proof}
For a bipartite zeta model, we have that
\beq
\bar{a}=\frac{L_1(\alpha-1)}{L_1(\alpha)}, \qquad \bar{b}=\frac{L_2(\beta-1)}{L_2(\beta)}.
\eeq
Also
\beq
f'_{1}(1)=\frac{\sum_{k}k(k-1)p_k}{\bar{a}}= \frac{L_{1}(\alpha-2)-L_{1}(\alpha-1)}{L_{1}(\alpha-1)},
\eeq
and similarly for $g_{1}'(1)$. Condition \eqref{etc} gives us eq. \eqref{epid}.
\end{proof}
An interesting particular case is obtained when the power laws for the two populations are the same, with $a_k=b_k=1$ for all $k$. In this case we get the known formula $T_{mf}T_{fm}=T_{c}^2$, with
\beq
T_{c}= \frac{\zeta(\alpha-1)}{\zeta(\alpha-2)-\zeta(\alpha-1)},
\eeq
discussed in \cite{Newspread}.

Many other aspects of the models proposed in this work can be studied numerically. Also, several other research lines deserve to be investigated, as, for instance, the analytic properties of the adjacency matrices or the Laplacian operators associated with these models. It would be also important to obtain growing models possessing bipartite or directed zeta graphs as limiting configurations.

\textbf{Acknowledgments}

The research of P. T. has been partly supported by the grant FIS2011--00260, Ministerio de Ciencia e Innovaci\'on, Spain.

\bigskip


\begin{thebibliography}{9}


\bibitem{Aiello} W. Aiello, F. Chung, L. Lu, in \textit{Proceedings of the 32nd Annual ACM Symposium on Theory of Computing}, pp. 171--180, Association of Computing Machinery, New York (2000).

\bibitem{ABrev} R. Albert and A.--L. Barab\'{a}si, Statistical mechanics of complex networks, \textit{Rev. Mod. Phys.} \textbf{74}, 47--97 (2002).


\bibitem{AM} R. M. Anderson, R. M. May, \textit{Infectious Diseases in Humans}, Oxford University Press, Oxford, 1992.

\bibitem{Apostol} T. Apostol, \textit{Introduction to Analytic Number Theory}, Springer, New York, 1976.

\bibitem{BA} A.--L. Barab\'asi and R. Albert, Emergence of scaling in random networks, \textit{Science} \textbf{286}, 509--512 (1999).

\bibitem{BAJB} A.--L. Barab\'asi, R. Albert, H. Jeong and G. Bianconi,  Power--law distribution of the World Wide Web, \textit{Science} \textbf{287}, 2115a (2000).

\bibitem{Boc} S. Boccaletti, V. Latora, Y. Moreno, M. Chavez, D.--U. Hwang, Complex networks: Structure and dynamics, \textit{Phys. Rep.} \textbf{424}, 175--308 (2006).

\bibitem{Bollobas} B. Bollob\'as, \textit{Random Graphs}, Academic Press, New York, 2nd ed. (2001).

\bibitem{Caldbook} G. Caldarelli, \textit{Scale--Free Networks}, Oxford University Press, 2007.

\bibitem{Call} D. S. Callaway, M. E. J. Newman, S. H. Strogatz and D. J. Watts, Network robustness and fragility: Percolation on random graphs, \textit{Phys. Rev. Lett.} \textbf{85}, 5468--5471 (2000).

\bibitem{frontiers} P. Cartier, B. Julia, P. Moussa, P. Vanhove Eds., \textit{Frontiers in Number Theory, Physics, and Geometry I: On Random Matrices, Zeta Functions, and Dynamical Systems}, Springer Verlag, Berlin, 2006.


\bibitem{CEbH} R. Cohen, K. Erez, D. ben--Avraham and S. Havlin, Resilience of the Internet to random breakdowns, \textit{Phys. Rev. Lett.} \textbf{85}, 4626--4628 (2000).

\bibitem{DG} S. N. Dorogovtsev and A. V. Goltsev, Critical phenomena in complex networks, \textit{Rev. Mod. Phys.} \textbf{80}, 1275 (2008).

\bibitem{DM} S. N. Dorogovtsev and J. F. F. Mendes, Evolution of networks, \textit{Adv. in Phys.} \textbf{51}, 1079--1187 (2002).

\bibitem{EK} V. M. Egu\'iluiz and K. Klemm,  Epidemic threshold in structured scale--free networks, \textit{Phys. Rev. Lett.} \textbf{89}, 108701 (2002).

\bibitem{ER1} P. Erd\"os and A. R\'enyi, On random graphs I, \textit{Publ. Mathematicae} \textbf{6}, 290--297 (1959).

\bibitem{Grass} P. Grassberger,  On the critical behavior of the general epidemic process and dynamical percolation,  \textit{Math. Biosci.} \textbf{63}, 157--172 (1983).

\bibitem{HM} H. Hatami and M. Molloy, The scaling window for a random graph with a given degree sequence, \textit{Random Struc. Algorithms}, \textbf{41}, 99--123, (2012).

\bibitem{Hoffman} M. E. Hoffman, Multiple harmonic series, Pacific J. Math. \textbf{152}, 275--290 (1992).

\bibitem{IK} H. Iwaniec and E. Kowalski, \textit{Analytic Number Theory}, Amer. Math. Soc., Colloq. Publ., vol. \textbf{53}, 2004.

\bibitem{JLR} S. Janson, T. {\L}uczak and A. Ruci\'nski, \textit{Random graphs}, Wiley--Interscience, New York, 2000.

\bibitem{Selberg} J. Kaczorowski, Axiomatic theory of L--functions: the Selberg class, in \textit{Analytic Number Theory: Lectures given at the C.I.M.E. Summer School} held in Cetraro, Italy, July 11-18, 2002, pp. 133--209, A. Perelli, C. Viola (eds.); Springer-Verlag, Berlin, 2006.

\bibitem{NTP} J. M. Luck, P. Moussa, and M. Waldschmidt Eds, \textit{Number Theory and Physics}, Springer Proceedings in Physics, \textbf{47}, Springer--Verlag, Berlin, 1990.

\bibitem{MR} M. Molloy and B. Reed, A critical point for random graphs with a given degree sequence, \textit{Random Struc. Algorithms},  \textbf{6}, 161--179 (1995).

\bibitem{Murray} J. D. Murray, \textit{Mathematical Biology}, Springer, Berlin, 1993.

\bibitem{Newspread} M. E. J. Newman, Spread of epidemic disease on networks, \textit{Phys. Rev E}, \textbf{66}, 016128 (2002).

\bibitem{Newrev} M. E. J. Newman, The structure and function of complex networks, \textit{SIAM Review} \textbf{45}, 167--256 (2003).



\bibitem{Newmanbook} M. E. J. Newman, \textit{Networks}, Oxford University Press, 2010.

\bibitem{Newman1} M. E. J. Newman, S. H. Strogatz and D. J. Watts,  Random graphs with arbitrary degree distributions and their applications, \textit{Phys. Rev. E} \textbf{64}, 026118 (2001).






\bibitem{PSV1} R. Pastor--Satorras and A. Vespignani, Epidemic spreading in scale--free networks,  \textit{Phys. Rev. Lett.} \textbf{86}, 3200-3203 (2001).

\bibitem{PSV2}  R. Pastor--Satorras and A. Vespignani, Immunization of complex networks, \textit{Phys. Rev. E} \textbf{65}, 036104 (2002).

\bibitem{Price} D. J. de S. Price, Networks of scientific papers, \textit{Science} \textbf{149}, 510--515 (1965).

\bibitem{SR} R. Solomonoff and A. Rapoport,  Connectivity of random nets, \textit{Bulletin of Math. Biophysics} \textbf{13}, 107--117 (1951).


\bibitem {Tempesta1} P. Tempesta,  Formal groups, Bernoulli--type polynomials and $L$--series, \textit{C. R. Math. Acad. Sci. Paris, Ser. I} \textbf{345}, 303--306 (2007).

\bibitem {Tempesta2} P. Tempesta,  $L$--series and Hurwitz zeta functions associated with the universal formal group, \textit{Annali Scuola Normale Superiore di Pisa, Ser V}, Vol \textbf{IX}, 133--144 (2010).

\bibitem{Tempesta3} P. Tempesta, Group entropies, correlation laws and zeta functions, \textit{Phys. Rev. E} \textbf{84}, 021121 (2011).

\bibitem{Ltemp} P. Tempesta, Random Graphs Arising from $L$--Functions: Topological Properties and Phase Transitions, preprint (2012), available at the website: http://www.crm.umontreal.ca/pub/Rapports/3300-3399/3320.pdf


\bibitem{TsallisEPJ} C. Tsallis, Connection between scale--free networks and nonextensive statistical mechanics, \textit{Eur. Phys. Journ. Special Topics} \textbf{161}, 175--180 (2008).

\bibitem{Tsallisbook} C. Tsallis, \textit{Introduction to Nonextensive Statistical Mechanics--Approaching a Complex World} (Springer, New York, 2009).


\bibitem{WF} S. Wasserman and K. Faust, \textit{Social Network Analysis}, Cambridge University Press, Cambridge, 1994.


\bibitem{Zagier} D. Zagier, Values of zeta functions and their applications, in \textit{First European Congress of Mathematics}, Vol II, A. Joseph et al. (eds.), Progr. in Math. \textbf{120}, Birkh\"auser, 497--512 (1994).

\end{thebibliography}
\end{document}